\documentclass{article}

\usepackage{subfigure}
\usepackage{cmll}
\usepackage{MnSymbol}
\usepackage{stmaryrd}
\usepackage{bussproofs}
\usepackage{bbold}
\sloppy

\newcommand\+{\mathrm{+}}
\newcommand\rta{\rightarrow}
\newcommand\Rta{\Rightarrow}
\newcommand\Lra{\Leftrightarrow}
\newcommand\llb{\llbracket}
\newcommand\rrb{\rrbracket}

\newcommand\llc{\llparenthesis}
\newcommand\rrc{\rrparenthesis}

\newcommand\cons{\mathrm{::}}

\newcommand\Lcalcul{${\lambda}\!$-calculus }
\newcommand\dlcalcul{$\partial\!\lambda\!$-calculus }
\newcommand\tdlcalcul{$\tau\!\partial\!\lambda\!$-calculus }

\newcommand\Dinf{\mathcal{D}_\infty}
\newcommand\Minf{\mathcal{M}_\infty}

\newcommand\lolipop{\multimap}

\newtheorem{definition}{Definition}
\newtheorem{theorem}{Theorem}
\newtheorem{lemma}{Lemma}

\newtheorem{fact}{Fact}

\newenvironment{proof}{{\bf Proof. }}{$\ \ \ \ \ \ \square$\\ \smallskip}

\title{On the discriminating power of tests in resource $\lambda$-calculus}

\begin{document}
\maketitle

\begin{abstract}
Since its discovery, differential linear logic (DLL) inspired numerous domains. In denotational semantics, categorical models of DLL are now commune, and the simplest one is Rel, the category of sets and relations. In proof theory this naturally gave birth to differential proof nets that are full and complete for DLL. In turn, these tools can naturally be translated to their intuitionistic counterpart. By taking the co-Kleisly category associated to the $!$ comonad, Rel becomes MRel, a model of the \Lcalcul that contains a notion of differentiation. Proof nets can be used naturally to extend the \Lcalcul into the lambda calculus with resources, a calculus that contains notions of linearity and differentiations. Of course MRel is a model of the \Lcalcul with resources, and it has been proved adequate, but is it fully abstract?

That was a strong conjecture of Bucciarelli, Carraro, Ehrhard and Manzonetto in \cite{BCEM11}. However, in this paper we exhibit a counter-example. Moreover, to give more intuition on the essence of the counter-example and to look for more generality, we will use an extension of the resource \Lcalcul also introduced by Bucciarelli {\em et al} in \cite{BCEM11} for which $\Minf$ is fully abstract, the tests.

\end{abstract}


\section{Introduction}
The first extension of the \Lcalcul with resources, by Boudol in 1993 \cite{Boudol93}, was introducing a special resource sensitive application that may involve multisets of affine arguments (each one has to be used at most one time). This was a natural way to export resource sensitiveness to the functional setting. However, gathering no known and interesting properties (confluence, linearity...), it was not fully explored. 

Later on, Ehrhard and Regnier, working on functional interpretation of differential proof nets, discovered a calculus similar to Boudol's one, named differential \Lcalcul \cite{EhRe04}. By adding to the $\lambda$-calculus a derivative operation $\frac{\partial M}{x}(N)$, which syntactically corresponds to a linear substitution of $x$ by $N$ in $M$, it recovers the resource-sensitiveness. This is done through the translation $(\lambda x.M)\ [N_1,...,N_n;N^!]\simeq(\lambda x.\frac{\partial^{\!(\!n\!)\!}M}{x}\!(\!N_1\!,\!...\!,N_n\!))\ N$ where $N_i$ are the linear arguments and $N$ is the non linear one. This more semantical view even allow for the generalisation of the operation and recover excellent semantical properties (confluence, Taylor expansion...). We will adopt the syntax of \cite{PaTr09} that re-implements improvements from differential $\lambda$-calculus into Boudol's calculus, and we will call it $\partial\lambda$-calculus.

The category Rel of set and relations is known to model the linear logic, and, despite its high degree of degeneration ($Rel^{op}=Rel$), it is a very natural construction. Indeed, what appeared to be a degeneration is in reality a natural choice that preserves all proofs, {\em i.e.} the interpretation function from proof to MRel is injective up to isomorphism (\cite{CaTo10}). But our principal interest for this category is that it models the differential linear logic, and of known such category it is the simplest and more natural.

As for every categorical model of linear logic, the interpretation of the $!$ induced a comonad. From that comonad we can construct the co-kleisly category. In the case of Rel, this new category, MRel, corresponds to the category of sets with, as morphisms from $A$ to $B$, the relations from $\mathcal{M}_f(A)$ (the finite multisets over $A$) to $B$. It is then a model of the $\lambda$-calculus and of the $\partial\lambda\!$-calculus. This construction being the most natural we can do, MRel is, {\em a priori}, one of the most natural models of the $\partial\lambda$-calculus (even if non well pointed \cite{BEM07}). 

It is only natural, then, to question on the depth of the link among the reflexive elements of MRel and the $\partial\lambda\!$-calculus. And more precisely among MRel's canonical reflexive element $\mathcal{M}_\infty$ and the $\partial\lambda\!$-calculus. Until now we knew that $\Minf$ (\cite{BEM07}) was adequate for the $\lambda$-calculus,  {\em i.e.} that two terms carrying the same interpretations in MRel behave the same way in all contexts. But we did not know anything about the counterpart, named full abstraction. 

This question has been thoroughly studied, however, since $\Minf$ has been proved (resp. in \cite{Man09}, \cite{BCEM12} and \cite{BCEM11}) fully abstract not only for both of the principal sub-calculi of $\partial\lambda$-calculus, namely the usual $\lambda$-calculus and Kfoury's linear calculus of \cite{Kfo00}, but also for the extension with tests of \cite{BCEM11} (denoted $\tau\partial\lambda$-calculus). Therefore Bucciarelli {\em et al} emit in \cite{BCEM11} a strong conjecture of full abstraction for the $\partial\lambda$-calculus.

However, and it is our purpose here, a counter example can be found. In order to exhibit this counter example, we will take an unusual shortcut using full abstraction result for $\tau\partial\lambda$-calculus. Indeed, we will prove a slightly more general theorem: the failure of full abstraction for $\partial\lambda$-calculus of any model that is fully abstract for $\tau\partial\lambda$-calculus. Due to this generalization we will not have to introduce the full description of $\Minf$ in the core of the article (it is available in annexes).

Additionally to be considerably easier and more intuitive than the direct and usual method, this way of proceeding is part of a larger study of full abstraction. Indeed, we are looking for a mechanical way to tackle full abstraction problems in two steps. First we extend the calculus with well chosen semantical objects in order to reach the definability of compact elements. Then we study the full abstraction question indirectly via the link between the operational equivalence of the original calculus and of its artificial extension. This reduces our mix of semantic and syntactic question to a purely syntactic one, allowing us to use powerful syntactic constructions.

Tests where introduced in \cite{BCL99} to have a full abstraction theorem for Boudol $\lambda$-calculus with resources. Later on,
the principle was improved in \cite{BCEM11}, implementing semantic objects in the syntax in order to get full abstraction of $\Minf$, following an idea of \cite{EL10}. This extension can, in our context, be compared to a basic exception mechanism. The term $\bar\tau(Q)$ is raising the exception (or test) $Q$, absorbing all its non resources-limited applications and the exception $\tau(M)$ is catching any exception in $M$ by annihilating all the head-$\lambda$-abstraction. The most important here being the scope of the $\tau(M)$ that act as an infinite application over $M$.

\vspace{-15pt}

\paragraph*{Notations:}
\noindent $\lambda \bar x^n$ will be used for $\lambda x_1,...,x_n$ ($n$ is not specified when it can be any integer) 
and $I$ will denotes the identity $\lambda x.x$.

\section{Background}
\label{1}

\subsection{\dlcalcul}
As explained, this article is directly following \cite{BCEM11}. For this reason we need to introduce the \dlcalcul and then the tests. In the \dlcalcul, the notion of linearity is capital. Any term in linear position will never suffer any duplication or erasing regardless the reduction strategy. Linear subterms are subterms that are either the first subterm of a lambda abstraction in linear position, the left side of an application that is in linear position, or in the linear part of its right side. The last case is the real improvement and asks for arguments to be separated in linear and non linear arguments. Therefore, the right side of the applications will be replaced by a new kind of expression different from terms, the ``bags''. Bags are multisets containing some linear (non banged) arguments and exactly one non linear (banged) argument:\vspace{-8pt}
\begin{center}
\begin{tabular}{l c l}
(terms) & $M,N :$ & $\lambda x.M\ |\ M\ B\ |\ M\+ N\ |\ 0$\\
(bags) & $B,C :$ & $[M_1,...,M_n;M^!]\ |\ B+C\ |\ 0$\\
\end{tabular}
\end{center}
\noindent This is the syntax of \cite{BCEM11} modulo the macro $M\+N=(\lambda x.x)\ [\{M\+N\}^!]$.\\
For convenience, the finite sums will be denoted $\Sigma_iQ_i$ and the different $0$'s are just the neutral elements of the different sums. This demonic sum had to be implemented since we want the calculus to be resource sensitive and confluent, thus there is no other choice than to considere the sum of all the possible outcomes. Sums distribute with any linear context:
\begin{center}
$\lambda x.(\Sigma_i M_i) = \Sigma_i(\lambda_x.M_i)$
\hskip 70pt
$(\Sigma_i M_i)\ (\Sigma_j N_j) = \Sigma_{i,j} (M_i\ N_j)$
\hskip 70pt
$([(\Sigma_{i_1\le k_1} M_{i_1}^1),...,(\Sigma_{i_n\le k_n} M_{i_n}^n);M^!])\ =\ 
 \Sigma_{(i_j)_j\le (k_j)_j}[ M_{i_1}^i,...,M_{i_n}^n;M^! ]$
\end{center}
In the application, each linear argument will replace one and only one occurrence of the variable, thus the need of two kinds of substitutions, the usual one, denoted $\{.\}$, and the linear one, denoted $\langle.\rangle$. This last will act like a derivation $M\langle N/x\rangle\sim \frac{\partial M}{\partial x}(N)$:
\begin{center}
$x\langle N/x\rangle = N$
\hskip 50pt
$x\langle N/y\rangle = 0$
\hskip 50pt
$(\lambda y.M)\langle N/x\rangle = \lambda y.(M\langle N/x\rangle)$

$([M_1, ...,M_n; M^!]) \langle N/x\rangle~=~(\Sigma_{i=1}^n [M_1,..,M_i\langle N/x\rangle,
 ...M_n;M^!])~+~[M_1,...,M_n,M\langle N/x\rangle;M^!]$

$(M\ P)\langle N/x\rangle=(M\langle N/x\rangle\ P)+(M\ P\langle N/x\rangle)$
\end{center}

\noindent This enables us to describe the $\beta$-reduction:\vspace{-5pt}
\begin{eqnarray*}
(\beta)\ \ &\ \ (\lambda x.M) [N_1,...,N_n;N^!] &\rta M\langle N_1/x\rangle\cdots\langle N_n/x\rangle\{N^!/x\}\vspace{-8pt}
\end{eqnarray*}
In other words $(\lambda x.M) [N_1,...,N_n;N^!]\rta \frac{\partial^nM}{\partial^n x}(N_1,...,N_n)(N)$

\subsection{\tdlcalcul}
In differential proof nets the 0-ary tensor and the 0-ary par can be added freely in the sense that we still have a natural interpretation in MRel and $\Minf$. These operations can be translated in our calculus as an exception mechanism. With on one side a $\tau(Q)$ that ``raises'' the exception (or test) $Q$ by burning its applicative context (whenever these applications do not have any linear component, otherwise it diverges). And with on the other side a $\bar\tau(M)$ that ``catch'' the exceptions in $M$ by burning the abstraction context of $M$ (whenever this abstraction is dummy). The main difference with a usual exception system is the divergence of the catch if no exception are raised.\\
We introduce a new operators and a new kind of expression that will play the role of exception, the tests:\vspace{-8pt}
\begin{eqnarray*}
\textrm{(terms)} & M,N : \bar\tau(Q)\\
\textrm{(test)} & Q,R : & \ \epsilon\ |\ Q|R\ |\ \tau (M)\ |\ Q\+R\ |\ 0
\end{eqnarray*}
New operator immediately imply new distribution rules for the sum and the linear substitution:
\begin{center}
$\tau(\Sigma_i M_i) = \Sigma_i\tau(M_i)$
\hskip 30pt
$\bar\tau(\Sigma_{i} Q_i) =  \Sigma_i\bar\tau(Q_i)$
\hskip 30pt
$\|_j \Sigma_{i} Q_{i(j)} = \Sigma_i \|_j Q_{i(j)}$\\
$\tau(M)\langle N/x\rangle = \tau(M\langle N/x\rangle)$
\hskip 30pt
$\bar\tau(Q)\langle N/x\rangle = \bar\tau(Q\langle N/x\rangle)$\\
$(Q\+R)\langle N/x\rangle = Q\langle N/x\rangle\+R\langle N/x\rangle$
\hskip 30pt
$(Q|R)\langle N/x\rangle = Q\langle N/x\rangle|R\langle N/x\rangle$
\end{center}

\noindent Here is the corresponding operational semantics:
\begin{eqnarray*}
(\gamma)\ \ &\ \ \tau[\bar\tau(Q)]\ &\rta\ Q\\
(\tau)\ \ &\ \ \tau (\lambda x.M)\ &\rta\ \tau (M\{0/x\})\\
(\bar\tau_1)\ \ &\ \ (\bar\tau(Q))\ [M^!]\ &\rta\ \bar\tau (Q)\\
(\bar\tau_2)\ \ &\ \ (\bar\tau(Q))\ [M_1,...,M_{n\ge 1};M^!]\ &\rta\ 0\\
(\epsilon)\ \ &\ \ \epsilon | \epsilon\ &\rta\ \epsilon
\end{eqnarray*}
The intuition of $\bar\tau(Q)$ is an operator that take a test (a Boolean value), compute it and returns an infinite $\lambda$-abstraction with no occurrence of the abstracted variables. The test $\tau(M)$ is taking a term and returns a successful test if the term is converging in a context that consists of an infinite empty application.

\subsection{Observational order and full abstraction}

In order to ask for full abstraction, one has to specify a reduction strategy. A natural choice would be the head reduction, but this would make $(\lambda x.x\ [0,0])$ a normal form while no applicative instantiation of $x$ allow the convergence of this term. Therefore, the reduction strategy we are considering will not be head-reduction, but the outer-head-reduction. This reduction will reduce subterms in linear position after the subterms in head positions (\cite{PaRo10}). The corresponding normal forms are terms and tests of the form:\vspace{-5pt}
\begin{center} $M+ \lambda\bar x. y\ [N_{1,1},...,N_{1,k_1};L_1^{!}]\ \cdots\ [N_{n,1},...,N_{n,k_n};L_n^{!}]$\\
$M+ \lambda\bar x. \bar\tau(Q)$\\
$Q+ (\| \tau(N_i))$
\end{center}\vspace{-3pt}
\noindent Where every $N_{\_}\!$ and $Q$ must be in outer-head normal forms and can't be a sum (but the $L_i\!$ are of any kind).

\begin{definition}
$M$ is observationally below $N$, if for all context $C\llc.\rrc$, we have $C\llc N\rrc$ which is outer-head-converging whenever $C\llc M\rrc$ is outer-head-converging.\\
They are observationally equivalent if moreover $N$ is observationally below $M$
\end{definition}
\noindent In the particular case of the \tdlcalcul, we can easily restrict contexts to test-contexts, which is contexts whose output is a tests. This will be applied systematically for simplification.\\
We will denote $\le_{\tau\partial}$ and $\equiv_{\tau\partial}$ the observational order and equivalence of the \tdlcalcul and $\le_{\partial}$ and $\equiv_{\partial}$ those of the \dlcalcul.

Bucciarelli, Carraro, Ehrhard and Manzonetto were then able to prove a strong theorem relating the model to the calculus:

\begin{theorem}
$\Minf$ is fully abstract for the $\Lambda$-calculus with resources and tests: for all closed terms $M,N$ with resources ans tests,
 $$\llb M\rrb=\llb N\rrb\ \Lra\ M\equiv_{\tau\partial} N\vspace{-5pt}$$
\end{theorem}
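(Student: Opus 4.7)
The plan is to prove the two directions of the equivalence separately, leveraging the compositional nature of $\llb \cdot \rrb$ in $\Minf$ and the expressive power added by the test machinery on top of the $\partial\lambda$-calculus.

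For the easy direction ($\Rta$), I would argue by adequacy: I would first establish that reduction preserves the interpretation, i.e.\ if $M \rta M'$ then $\llb M \rrb = \llb M' \rrb$ (and analogously for tests and bags), which is a routine induction on the reduction rules $(\beta), (\gamma), (\tau), (\bar\tau_1), (\bar\tau_2), (\epsilon)$ using the definitions of the linear and non-linear substitutions in $\Minf$. Then I would show the key adequacy statement: a test $Q$ is outer-head-converging iff $\llb Q \rrb \neq \emptyset$. The ``only if'' side follows from the first point together with the fact that outer-head-normal forms of tests have non-empty interpretation (in particular $\epsilon$ has a distinguished point). The ``if'' side is the Tait-Girard style reducibility argument, adapted to the $\tau\partial\lambda$-calculus with its sum structure: define a reducibility predicate on each type, prove it is stable under the constructors, and conclude that any term with a point in its interpretation must converge. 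Given these, compositionality of $\llb \cdot \rrb$ immediately gives: if $\llb M \rrb = \llb N \rrb$ then for every test-context $C\llc.\rrc$ one has $\llb C\llc M\rrc\rrb = \llb C\llc N\rrc\rrb$, hence $C\llc M\rrc$ converges iff $C\llc N\rrc$ does, which is $M \equiv_{\tau\partial} N$.

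For the hard direction ($\Lta$), I would prove the contrapositive by means of a \emph{definability/separation} theorem: whenever $\llb M \rrb \neq \llb N \rrb$, I would construct a test-context $C\llc.\rrc$ such that $C\llc M\rrc$ converges while $C\llc N\rrc$ does not. Pick a witness point $\alpha$ in the symmetric difference of $\llb M \rrb$ and $\llb N \rrb$; since $\Minf$ is the colimit of its finite approximations, $\alpha$ has a finite description relative to the reflexive structure. The plan is then, by induction on the depth of $\alpha$ inside the reflexive object, to build a ``test at $\alpha$'', i.e.\ a closed test-context whose interpretation is precisely the singleton projecting to $\alpha$, so that evaluating it on $M$ succeeds exactly when $\alpha \in \llb M \rrb$. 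The operators $\tau$ and $\bar\tau$ are what makes this possible: $\tau$ forces an infinite empty applicative context and thus allows observing an entire multiset of argument requests at once, while $\bar\tau$ lets one inject a pre-computed test back into term position, so that nested multisets in $\Minf$ translate into nested $\tau/\bar\tau$ alternations, with the parallel constructor $Q|R$ and the $\epsilon$ constant providing the multiplicative/unit combinators needed to assemble a multiset from its elements.

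The main obstacle is clearly this definability lemma: the syntactic tree of tests must be put in exact correspondence with the inductive description of compact points of $\Minf$, which requires a careful book-keeping between the multiset structure $\mathcal{M}_f(-)$ appearing on the left of morphisms in MRel and the nested $\tau(\|_j \tau(\cdots))$ pattern of the outer-head normal forms listed just before the definition of $\equiv_{\tau\partial}$. The compositionality and congruence properties of $\equiv_{\tau\partial}$ are then used to glue local separators into a global context. Once definability is in place, the full abstraction direction is immediate: any two terms with distinct interpretations are separated by the test-context associated to a witness point, contradicting observational equivalence.
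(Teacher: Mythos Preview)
The paper does not actually prove this theorem: it is quoted as a result of Bucciarelli, Carraro, Ehrhard and Manzonetto, with the attribution ``Bucciarelli, Carraro, Ehrhard and Manzonetto were then able to prove a strong theorem\ldots'' and a reference to \cite{BCEM11}. There is therefore no proof in the present paper to compare your proposal against.

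For what it is worth, your outline (invariance of $\llb\cdot\rrb$ under reduction, adequacy via a reducibility argument, and the converse via a definability/separation lemma showing that every point of $\Minf$ is the interpretation of a test built from $\tau$, $\bar\tau$, $|$ and $\epsilon$) is exactly the strategy used in \cite{BCEM11}, so as a reconstruction it is on target. But within this paper the theorem is taken as an input, not something to be re-proved.
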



\section{The counter-example}
\label{4}
In order to exhibit our counter-example we will use the following property:

\begin{fact}
Let $\mathcal{B}$ a calculus and $\mathcal{A}$ a super-calculus. Let $\mathcal{M}$ a model that is fully abstract for $\mathcal{A}$. $\mathcal{M}$ is fully abstract for $\mathcal{B}$ iff the operational equivalences of $\mathcal{B}$ and $\mathcal{A}$ are equal on their domain intersection.
\end{fact}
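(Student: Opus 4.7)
The plan is to chain two biconditionals and observe that the fact is essentially a diagrammatic manipulation. First I will observe that since $\mathcal{B}$ embeds into $\mathcal{A}$ as a sub-calculus, every $\mathcal{B}$-term is an $\mathcal{A}$-term, every $\mathcal{B}$-context is an $\mathcal{A}$-context, and the model $\mathcal{M}$, being a model of $\mathcal{A}$, is automatically a model of $\mathcal{B}$ through restriction of its interpretation function $\llb \cdot \rrb$. This is what allows the two statements ``$\mathcal{M}$ is fully abstract for $\mathcal{A}$'' and ``$\mathcal{M}$ is fully abstract for $\mathcal{B}$'' to live in the same picture and the equivalences $\equiv_\mathcal{A}$ and $\equiv_\mathcal{B}$ to be compared on the common domain of $\mathcal{B}$-terms.

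Next I will exploit the hypothesis that $\mathcal{M}$ is fully abstract for $\mathcal{A}$, i.e.\ $\llb M\rrb = \llb N\rrb \Lra M \equiv_\mathcal{A} N$ for all $\mathcal{A}$-terms $M,N$. Restricting this to $M,N \in \mathcal{B}$, the claim to be proved
$$\llb M\rrb = \llb N\rrb \quad\Lra\quad M \equiv_\mathcal{B} N$$
is, by transitivity of $\Lra$, equivalent to the biconditional
$$M \equiv_\mathcal{A} N \quad\Lra\quad M \equiv_\mathcal{B} N \qquad (M,N \in \mathcal{B}).$$
This last statement is exactly what ``the operational equivalences of $\mathcal{B}$ and $\mathcal{A}$ are equal on their domain intersection'' means, and both implications of the fact fall out simultaneously.

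There is essentially no obstacle here: once the restriction of the semantic bracket to $\mathcal{B}$-terms is justified, the proof is a three-step chain of equivalences. The only modest care will be to note that the convergence predicate used in the definitions of $\equiv_\mathcal{B}$ and $\equiv_\mathcal{A}$ is the same (in our setting, outer-head convergence), so that the two observational preorders really are comparable on $\mathcal{B}$-terms; this is automatic from the inclusion of contexts. Thus the proof will fit in a few lines and serves only to package a trivial but useful reformulation that we shall exploit in the sequel.
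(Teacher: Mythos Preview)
Your proposal is correct and is the natural unfolding of the definitions; the paper itself states this as a \emph{Fact} without proof, treating it as an immediate observation, so there is nothing to compare against beyond noting that your chain of biconditionals is exactly what the author has in mind.
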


\noindent In our context it means that, in order to prove the non full abstraction for the \dlcalcul, it is sufficient to find two terms of the \dlcalcul that cannot be separated by any context of the \dlcalcul but that are separated by a context of the \tdlcalcul. This makes the research and the proof quite easier when the terms  of the \dlcalcul involved are complex but not the context of the \tdlcalcul.

\medskip

\noindent We are firstly exhibiting a term $A$ of the \dlcalcul that is observationally above the identity in the \dlcalcul', but not in the \tdlcalcul's observational order:\vspace{-5pt}
\begin{equation}
A=\Theta\ [\lambda uvw.w\ [I\ [v^!]]\ ,\ (\lambda uvw.u\ [(v\ [w^!])^!])^!] \label{eq1} \vspace{-5pt}
\end{equation}
where $\Theta$ is the Turing fix point combinator:\vspace{-5pt}
 $$(\lambda gu. u [(g\ [(g\ [u^!])^!])^!])\ [(\lambda gu. u [(g\ [(g\ [u^!])^!])^!])^!]\vspace{-5pt}$$
This term seems quite complex, but, modulo $\eta$-equivalence, $A$ reduces exactly to any of the elements of the following sum, and thus can be think as an equivalent:\vspace{-5pt}
\begin{equation}
\Sigma_{n\ge 1}B_n=\Sigma_{n\ge 1}\lambda \bar u^nw. w\ [I\ [u_1^!]\ [u_2^!]\ \cdots\ [u_n^!]] \label{eq2}\vspace{-5pt}
\end{equation}
This is due to the following property:

\begin{lemma}
\label{lemme5.3}
If $A_i=\lambda\bar x^{i+1}.A\ (x_1\ [x_2^!]\ \cdots\ [x_{i+1}^!])$ then $A_0\rta_\eta A$ and for all $i$, \vspace{-5pt}
$$A_{i}\rta A_{i+1}+B_{i+1}\vspace{-5pt}$$
\end{lemma}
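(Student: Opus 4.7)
The approach I would take: first dispatch the $\eta$ claim, which follows immediately since $A_0=\lambda x_1.A[x_1^!]$ and the resource $\eta$-rule $\lambda x.M[x^!]\to_\eta M$ applies (as $x_1\notin FV(A)$). For the inductive equation, I would unfold the Turing combinator $\Theta=T[T^!]$ (with $T=\lambda gu.u[(g[(g[u^!])^!])^!]$) once inside $A_i=\lambda\bar x^{i+1}.A[E_i^!]$, where $E_i:=x_1[x_2^!]\cdots[x_{i+1}^!]$, and read off the $B_{i+1}$ and $A_{i+1}$ contributions from the two halves of the resulting linear substitution.

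Concretely, two $\beta$-steps reduce $A=T[T^!][P,Q^!]$ to a sum of two terms corresponding to which of the two occurrences of $u$ in $u[(T[(T[u^!])^!])^!]$ absorbs the linear argument $P$, the other being filled in by the non-linear substitution with $Q$:
\[A\;\to^2\;P[\alpha^!]\;+\;Q[\beta;\alpha^!],\qquad \alpha=T[(T[Q^!])^!],\ \beta=T[T[P;Q^!];(T[Q^!])^!].\]
Feeding each summand the bag $[E_i^!]$ then produces the two expected branches. For the $P$-summand: since the body $w[I[v^!]]$ of $P$ has no $u$, $\alpha$ is simply discarded, and substituting $E_i$ for $v$ yields $\lambda w.w[I[E_i^!]]$; under the outer $\lambda\bar x^{i+1}$ this matches $B_{i+1}=\lambda\bar x^{i+1}w.w[(I[x_1^!]\cdots[x_{i+1}^!])^!]$ once one reduces $I[E_i^!]\to E_i$ and uses that $I[x_1^!]\cdots[x_{i+1}^!]$ reduces to the same $E_i$. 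For the $Q$-summand: since the body $u[(v[w^!])^!]$ of $Q$ contains $u$ only at the head, $\beta$ replaces that position; substituting $E_i$ for $v$ and renaming $w\to x_{i+2}$ yields $\lambda\bar x^{i+2}.\beta[E_{i+1}^!]$, using $E_i[x_{i+2}^!]=E_{i+1}$.

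The main obstacle is to justify that the residual $\beta=T[T[P;Q^!];(T[Q^!])^!]$ can be identified with $A=\Theta[P,Q^!]$, so that $\lambda\bar x^{i+2}.\beta[E_{i+1}^!]$ really is $A_{i+1}$. Morally, both terms realise \emph{``the Turing fixed point in the environment that uses $P$ exactly once and $Q$ arbitrarily often''}, differing only in at which level of the Turing unfolding the bag $[P,Q^!]$ is introduced. I would close this gap by a direct $\beta$-bookkeeping argument that reduces $\beta$ by the same two $\beta$-steps used for $A$ above, but with $T[P;Q^!]$ playing the role of $P$ and $T[Q^!]$ the role of $Q$, and then matching the resulting sum against the reductions of $A$ itself; the modulo-$\eta$ equivalence mentioned in the informal identification of $A$ with $\Sigma_{n\ge 1}B_n$ is exactly what permits this last identification. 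Combining this with the two branch calculations yields $A_i\to A_{i+1}+B_{i+1}$ as required.
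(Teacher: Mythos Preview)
Your approach is exactly the paper's: its entire proof is the single sentence ``Simple reduction unfolding the $\Theta$ once.'' You have carried out that unfolding in detail, and your computation of the two summands is correct: the $P$-branch gives $B_{i+1}$ and the $Q$-branch gives $\lambda\bar x^{i+2}.\beta\,[E_{i+1}^!]$ with $\beta=T[T[P;Q^!];(T[Q^!])^!]$.

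You are also right to flag the ``main obstacle'': $\beta$ is \emph{not} the term $A=T[T^!][P;Q^!]$, so strictly speaking one does not land on $A_{i+1}$ after one unfolding, and the paper's one-liner glosses over this. However, your proposed fix is shaky. Reducing $\beta$ by ``the same two $\beta$-steps'' does not bring you back to $A$; it produces a yet deeper unfolding with the bag $[P;Q^!]$ pushed one level further inside, and iterating this never terminates. And invoking ``the modulo-$\eta$ equivalence mentioned in the informal identification of $A$ with $\Sigma_{n\ge1}B_n$'' is circular: that identification is precisely what the lemma is supposed to justify. A clean repair is either to read the conclusion up to $\eta$ throughout (the lemma already uses $\rta_\eta$ for $A_0$, and $\lambda x.\beta[x^!]\leftarrow_\eta\!\!/\!\!\rta A$), or to redefine $A_i$ with $\beta$ in place of $A$ and observe that the two later lemmas only use that each $A_i$ reduces to $B_{i+1}$ plus a term of the same shape---which your computation does establish.
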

\begin{proof}
Simple reduction unfolding the $\Theta$ once.
\end{proof}

\noindent In absence of tests, this term has a comportment similar to $\epsilon_0$ in the sense that it will converges in any applicative context provided that these applications do not carry linear components. In particular it converges more often than the identity:\begin{lemma}
For all context $C\llc.\rrc$ of the \dlcalcul, if $C\llc I\rrc$ converges then $C\llc A\rrc$ converges, {\em i.e.} $I\le_\partial A$
\end{lemma}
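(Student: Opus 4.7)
The plan is to exploit Lemma~\ref{lemme5.3} to unfold $A$ into a controlled sum of head-normal forms resembling $\eta$-expansions of the identity, and then to extract one that does the job in the given context. Iterating the lemma $n$ times yields $A \rta^* \sum_{k=1}^{n} B_k + A_n$, and since every construct of the \dlcalcul distributes linearly over sums, this gives $C\llc A\rrc \rta^* \sum_{k=1}^{n} C\llc B_k\rrc + C\llc A_n\rrc$. Recalling that in the \dlcalcul convergence of a sum amounts to convergence of at least one of its summands, it will be enough to exhibit some $k\le n$ for which $C\llc B_k\rrc$ converges.

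The core observation is that $B_k = \lambda u_1 \cdots u_k w.\, w\ [I\ [u_1^!]\cdots[u_k^!]]$ is itself in head-normal form: it has $k+1$ outer abstractions and its body is headed by the free variable $w$, and crucially $w$ does not occur in any subterm that could later undergo a linear substitution (only the $u_i$ sit inside $I\ [u_i^!]$, so derivatives act locally). Consequently, if the context surrounds a copy of $B_k$ with fewer than $k+1$ bags, at least one abstraction among the $u_i$ or $w$ survives and we immediately land in an outer-head normal form; if the context provides exactly $k+1$ bags then $w$ is substituted by the non-linear part of the last bag and becomes the new head. In either case the head shape of $B_k$ is not collapsed to~$0$ by the substitutions arising from $C\llc.\rrc$, so $C\llc B_k\rrc$ reaches a proper head-normal form.

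The hard part is to choose $n$ correctly for the given context. Because $C\llc I\rrc$ converges, its outer-head reduction has finite length, during which only finitely many copies of the hole become active at head position and each receives a bounded number of bags in that reduction. Taking $n$ strictly larger than this maximal applicative depth guarantees that $B_n$ survives in every active copy. The cleanest way to formalise the argument will be by induction on the length of the converging outer-head reduction of $C\llc I\rrc$: at each step we mirror the reduction on $C\llc A\rrc$, invoking Lemma~\ref{lemme5.3} on demand whenever a copy of $A$ lands in head position and would otherwise need to reduce, thereby exposing some $B_n$ whose head-normal form yields the required converging summand.
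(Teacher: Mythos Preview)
Your sketch is on the right track but misses the one ingredient that makes the paper's proof a two--liner: the \emph{context lemma} of \cite{PaRo10}. Because $I$ and $A$ are closed, that lemma lets one replace an arbitrary $C\llc.\rrc$ by a purely applicative context $\llc.\rrc\ P_1\cdots P_k$. Then Lemma~\ref{lemme5.3} yields $A\rta^* U+B_k$, and one computes directly $B_k\ P_1\cdots P_k\rta \lambda w.\,w\ [I\ P_1\cdots P_k]=\lambda w.\,w\ [C\llc I\rrc]$, which converges precisely because $C\llc I\rrc$ does. No induction on reduction length, no bound on applicative depth, no tracking of copies of the hole.

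Your middle paragraph already tacitly assumes the context is applicative (``if the context surrounds a copy of $B_k$ with \ldots\ bags''), so what you really need is the context lemma to justify that reduction. Moreover the case analysis there is incomplete: you treat ``fewer than $k{+}1$ bags'' and ``exactly $k{+}1$ bags'' but not ``more than $k{+}1$'', and in the ``exactly $k{+}1$'' case the last bag may carry linear arguments, in which case $w$ is \emph{not} simply replaced by the non--linear part (with two or more linear arguments the summand collapses to $0$). The final paragraph's induction by ``mirroring'' the outer--head reduction of $C\llc I\rrc$ does not go through as stated either: when a copy of $I$ reaches head position and is $\beta$--reduced, the corresponding copy of $B_n$ produces a term of shape $\lambda w.\,w\ [\ldots]$ rather than what $I$ produced, so the residual contexts on the two sides no longer coincide and the induction hypothesis is lost. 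The way out is exactly to observe that at that moment the surrounding context is applicative and stop --- which is the context lemma again.
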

\begin{proof}
Let $C\llc.\rrc$ a context that converge on $I$\\
With the context lemma (\cite{PaRo10}), and since neither $I$ nor $A$ has free variables, we can assume that $C\llc.\rrc=\llc.\rrc\ P_1\ \cdots\ P_k$, thus by lemma \ref{lemme5.3}, we have $A\rta^* U+B_k$ and \vspace{-5pt}
\[C\llc A\rrc\rta^* U' + \lambda w.w\ [I\ P_1\ \cdots\ P_k]=U'+\lambda w.w\ [C\llc M\rrc]\vspace{-8pt}\]
converges.
\end{proof}

\noindent But in presence of real tests, its comportment appeared to be different that $\epsilon_0$ in the sense that it diverges under a $\bar\tau$. In particular it is not observationally above the identity in \tdlcalcul:

\begin{lemma}
\label{lemme 2}
In the \tdlcalcul, $\tau(A\ [\epsilon_0^!])$ diverges, while $\tau(I\ [\epsilon_0^!])$ converges, {\em i.e.} $I\not\le_{\tau\partial} A\vspace{-5pt}$
\end{lemma}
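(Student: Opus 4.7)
My plan is to handle the two halves separately. Convergence of $\tau(I\ [\epsilon_0^!])$ is routine: reading $\epsilon_0 = \bar\tau(\epsilon)$, one $(\beta)$-step gives $I\ [\epsilon_0^!] \to \epsilon_0$ and one $(\gamma)$-step gives $\tau(\bar\tau(\epsilon)) \to \epsilon$, which is a non-zero test normal form. The real work is the divergence of $\tau(A\ [\epsilon_0^!])$, which I plan to establish by exhibiting a reduction loop $\tau(A\ [\epsilon_0^!]) \to^* \tau(A\ [\epsilon_0^!]) + 0$; the term is then trapped without ever reaching an outer-head normal form.

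To set up the loop, iterate Lemma~\ref{lemme5.3} to obtain $A \to^* A_n + \sum_{k=1}^n B_k$ for every $n \geq 1$, so that after applying $[\epsilon_0^!]$ and $\tau$ and distributing the sum, $\tau(A\ [\epsilon_0^!]) \to^* \tau(A_n\ [\epsilon_0^!]) + \sum_{k=1}^n \tau(B_k\ [\epsilon_0^!])$. For each $B_k$, a $(\beta)$-step gives $B_k\ [\epsilon_0^!] \to \lambda u_2\cdots u_k w.\ w\ [I\ [\epsilon_0^!]\ [u_2^!]\cdots [u_k^!]]$, whose head is the variable $w$; rule $(\tau)$ then strips every outer $\lambda$ by substituting $0$, and once $w$ is replaced by $0$, sum-distributivity over application puts $0$ in head position, collapsing the body to $\tau(0) = 0$. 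For the residual, unfolding $A_n = \lambda\bar x^{n+1}.\ A\ (x_1\ [x_2^!]\cdots [x_{n+1}^!])$, the $(\beta)$-step substitutes $\epsilon_0 = \bar\tau(\epsilon)$ for $x_1$; iterating $(\bar\tau_1)$ collapses $\bar\tau(\epsilon)\ [x_2^!]\cdots [x_{n+1}^!] \to^* \bar\tau(\epsilon)$, so that $A_n\ [\epsilon_0^!] \to^* \lambda x_2\cdots x_{n+1}.\ A\ [\epsilon_0^!]$. Since $A\ [\epsilon_0^!]$ is closed, rule $(\tau)$ vacuously strips these abstractions to produce $\tau(A\ [\epsilon_0^!])$ again, closing the loop.

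A short induction on reduction length then shows that every reduct of $\tau(A\ [\epsilon_0^!])$ still carries $\tau(A\ [\epsilon_0^!])$ as a summand, which is itself reducible, while all other summands reduce only to $0$; hence no reduct is a non-zero outer-head normal form, giving $\tau(A\ [\epsilon_0^!]) \not\downarrow$. The main obstacle will be confirming that no alternative outer-head reduction strategy can short-circuit this loop: this should follow from the observation that the only non-trivial outer-head redex in each of the enumerated summands is precisely the Turing unfolding of $\Theta$ inside $A$, so confluence together with the sum-distribution rules force every reduction sequence into the pattern above.
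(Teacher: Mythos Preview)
Your treatment of the convergent half and of the $B_k$ summands is fine and matches the paper. The gap is in the divergence argument for the main summand. You try to close an actual \emph{loop} $\tau(A\ [\epsilon_0^!])\to^*\tau(A\ [\epsilon_0^!])$ and then assert, ``by induction on reduction length'', that every reduct of $\tau(A\ [\epsilon_0^!])$ still carries $\tau(A\ [\epsilon_0^!])$ itself as a summand. That invariant is false: after a single unfolding of $\Theta$ the reduct contains $\tau(A_1\ [\epsilon_0^!])$ (or a $\beta$-reduct thereof), not the original term, and your path only recovers $\tau(A\ [\epsilon_0^!])$ after a detour through $A_n$ together with $(\bar\tau_1)$-steps performed \emph{inside} the argument of $A$, which is not an outer-head position. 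More fundamentally, a cycle $M\to^+M$ in full reduction does not by itself exclude an outer-head normal form (think of $K\,I\,\Omega$ in ordinary $\lambda$-calculus), so the last paragraph as written does not yet establish divergence.

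The paper sidesteps both problems by never trying to close the loop. It shows, for every $i$, that $\tau(A_i\ [\epsilon_0^!])\to^*\tau(A_{i+1}\ [\epsilon_0^!])+0$: the index \emph{strictly increases} at each stage, so the reduction sequence is manifestly infinite and every reduct is either $0$ or still headed by an unfoldable $\Theta$. This coinduction on the growing index $i$ is the correct invariant replacing your ``same summand'' claim, and it needs neither a reduction under a bang nor any appeal to confluence. Your computation for $A_n\ [\epsilon_0^!]$ is then superfluous; the $\Theta$-unfolding $A_i\to A_{i+1}+B_{i+1}$ from Lemma~\ref{lemme5.3} already provides the required progress.
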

\begin{proof}
For all $i$, $\tau (A_i\ [\epsilon_0^!])$ diverges since, by co-induction:\vspace{-5pt}
\begin{eqnarray*}
\tau(A_i\ [\epsilon_0]) &\rta& \tau ((A_{i+1}+(\lambda\bar x^{i+1}y.y\ [I\ [x_1^!]\ \cdots\ [x_{i+1}^!]]))\ [\epsilon_0])\\
                       &=& \tau (A_{i+1}\ [\epsilon_0]) + \tau((\lambda\bar x^{i+1}y.y\ [I\ [x_1^!]\ \cdots\ [x_{i+1}^!]])\ [\epsilon_0])\\
                       &\rta& \tau (A_{i+1}\ [\epsilon_0]) +  \tau (\lambda\bar x^{i}y.y\ [\epsilon_0\ [x_1^!]\ \cdots\ [x_{i+1}^!]])\\
                       &\rta^*& \tau (A_{i+1}\ [\epsilon_0]) +  \tau (0\ [\epsilon_0\ [0^!]\ \cdots\ [0^!]])
\end{eqnarray*}
The non outer-head convergence comes with the co-induction hypothesis for the first term, and is trivial for the second.
\end{proof}

\vspace{-15pt}

\noindent Hence, we have broken the conjecture concerning the equality between the observational and denotational orders. Let's break the whole conjecture:

\begin{theorem}
$\Minf$ is not fully abstract for the $\lambda$-calculus with resources
\end{theorem}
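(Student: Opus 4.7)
The plan is to invoke the Fact stated above, instantiated with $\mathcal{B}$ being the \dlcalcul, $\mathcal{A}$ the \tdlcalcul, and $\mathcal{M}$ the model $\Minf$. Since $\Minf$ is fully abstract for the \tdlcalcul and the latter extends the former, it suffices to produce two closed terms of the \dlcalcul that are observationally equivalent in the \dlcalcul but distinguished by some \tdlcalcul-context.

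The candidate pair I would take is $A$ and $A+I$, both manifestly in the pure \dlcalcul. Since every constructor of the calculus distributes over $+$, for any context $C$ we have $C\llc A+I\rrc=C\llc A\rrc+C\llc I\rrc$, and a sum outer-head-converges exactly when one of its summands does. The inequality $A\le_\partial A+I$ is then immediate, while $A+I\le_\partial A$ reduces directly to $I\le_\partial A$, which has already been established. Hence $A\equiv_\partial A+I$.

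For the failure in the \tdlcalcul I would simply reuse the separating context of Lemma \ref{lemme 2}. With $C\llc.\rrc=\tau(\llc.\rrc\,[\epsilon_0^!])$, distributivity gives $C\llc A+I\rrc=\tau(A\,[\epsilon_0^!])+\tau(I\,[\epsilon_0^!])$: the second summand converges (it reduces via $\beta$ and $\gamma$ to $\epsilon$), so $C\llc A+I\rrc$ converges, whereas $C\llc A\rrc=\tau(A\,[\epsilon_0^!])$ diverges by Lemma \ref{lemme 2}. Therefore $A\not\equiv_{\tau\partial}A+I$, so the pair $(A, A+I)$ sits in the intersection of the two calculi but witnesses that their operational equivalences disagree. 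By the Fact, $\Minf$ cannot be fully abstract for the \dlcalcul.

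All the real difficulty has already been packaged into the two preceding lemmas: $I\le_\partial A$, which requires understanding the applicative behaviour of the fixed-point term, and $I\not\le_{\tau\partial}A$, which exploits $\epsilon_0$ under a $\tau$. The only conceptual ingredient added at this point is the ``$+I$'' trick, which promotes the one-sided inequality $I\le_\partial A$ into a genuine equivalence of the form required by the Fact; the only subtlety worth double-checking is that a sum $P+Q$ outer-head-converges iff at least one of its summands does, which should be read off directly from the outer-head normal form description given just before Definition~1.
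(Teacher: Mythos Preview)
Your overall strategy is exactly the intended one: the paper states the theorem immediately after the two lemmas without spelling out a proof, and the passage from the inequational separation ($I\le_\partial A$ but $I\not\le_{\tau\partial}A$) to an equational one via the pair $(A,\,A{+}I)$ is precisely the missing step. The use of the separating context $\tau(\llc.\rrc\,[\epsilon_0^!])$ from Lemma~\ref{lemme 2} is also the right move.

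There is, however, one genuine slip in your justification. You write that ``every constructor of the calculus distributes over $+$'' and conclude $C\llc A{+}I\rrc=C\llc A\rrc+C\llc I\rrc$ for \emph{all} contexts. This is false: the banged position in a bag does \emph{not} commute with sum. Looking at the distributivity clauses listed in Section~\ref{1}, the rule for bags only pushes sums out of the linear components $M_1,\dots,M_n$, leaving the $M^!$ slot untouched; indeed $[(M{+}N)^!]\neq[M^!]+[N^!]$ in general. So for a context whose hole sits under a $(-)^!$, your equation fails.

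The repair is already in the paper: invoke the context lemma from \cite{PaRo10}, exactly as in the proof of $I\le_\partial A$. Since $A$ and $A{+}I$ are closed, it suffices to test them in applicative contexts $\llc.\rrc\,P_1\cdots P_k$. There the hole is in head (hence linear) position, so $(A{+}I)\,P_1\cdots P_k = A\,P_1\cdots P_k + I\,P_1\cdots P_k$ does hold, and your argument goes through verbatim: the sum converges iff one summand does, and $I\le_\partial A$ collapses this to convergence of $A\,P_1\cdots P_k$. With that correction the proof is complete.
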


\section{Further works}
First a diligent reader will remark that we have a critical use of the demonic sum which is very powerful in this calculus. And an even more diligent one will remark that an arbitrary choice have been made concerning this sum: we could differentiate terms and reduced of terms and remove sums from the original syntax (they just have to appear in reductions of terms). The choice we made here corresponds to the one of \cite{BCEM11} and carries an understandable counter-example. But we claim that another equivalent counter-example arises for the case with limited sum. This counter-example is, however, a little more complicated and make it necessary to rework the material of \cite{BCEM11} (even if everything works exactly the same way).

Our counter-example can be translated to some related cases. In particular, to prove non full abstraction of Scott's $\Dinf$ for the \Lcalcul with angelic and demonic sums (conjectured in \cite{BEM09}). For this calculus the extension with tests exists and is fully abstract for $\Dinf$, this is a trivial modification of the tests of \cite{Bre12a} (using general demonic and angelic sums). In this framework the term $\theta(\lambda xy.x+y)$ plays exactly the role of $A$ in our example with the same output.

In the end, from a unique object that is DLL, we exhibit two natural constructions, one in the semantical world, the other in the syntactical one, but they appeared do not respect full abstraction. One would say that they are not that natural and that more natural one may be found. But this would be to easy, from the state of art we don't known more natural construction. The misunderstanding comes with the concept of naturality, it seems that the syntactic idea of ``convergence'' does not really correspond to the equivalent in semantical word. One being a lowest fix point and the second a largest one. This difference appears when working with the demonic sum that allow to check the convergence in unbounded applicative context.

Finally we presented tests as a general tool whose importance is above the role we gave them here. This result is interesting and important as it presents tests as useful tools to verify that full abstraction fails. But it remains a negative result that does not justify alone any real interest for them. Further works will then focus on presenting positive proofs of full abstractions that are using tests. Following this way we already submitted a revisited proof of full abstraction of the Scott's $\Dinf$ for the usual \Lcalcul \cite{Bre12a}.



\bibliographystyle{eptcs}
\bibliography{article}

\newpage

\appendix
\section{The model $\Minf$}
\subsection{Categorical model}
The category Rel of sets and relations is known to be a model of linear logic as it is a Seely category (we are giving the interpretation, but we will let the comutative diagrams to the reader since their comutations are trivial or like):

It is monoidal with tensor functor given by $A\otimes B=A\times B$, $f\otimes g=\{((u,x),(v,y))|(u,v)\in f, (x,y)\in g\}$ and with the arbitrary unit $1=\{*\}$. It is symetric monoidal close with $(A\lolipop B)=A\times B$ if we take the evaluation $ev=\{(((a,b),a),b)|a\in A,b\in B\}\in Rel((A\lolipop B)\otimes A,B)$. So it is star autonomus with $1$ as dualising object (for a trvial duality).\\
This give us the interpretation of multiplicatives: $A\otimes B=A\parr B=A\lolipop B=A\times B$

The category is cartesian, with catesian product $\bigwith A_i=\{(i,x)|i\in I, x\in s_i\}$, projections $\pi_i=\{((i,a),a)|a\in A_i\}$ and product of morphisms $\bigwith_i f_i=\{(b,(i,a))|(b,a)\in f_i\}$. The terminal object is $\top =\emptyset$.\\
This give us the interpretation of additives: $\bigoplus_{i\in I}A_i=\bigwith_{i\in I}A_i=\{(i,a)|i\in I, a\in A_i\}$

We can add a comonade $(!,d,p)$ where the functor is define by $!A=\mathcal{M}_f(A)$, $!f=\{([a_1,...,a_k],[b_1,...,b_k)|\forall i\le k,(a_i,b_i)\in f\}$, the deriliction by $d_A=\{([a],a)|a\in A\}$ and the digging by $p_A=\{(m_1\+\cdots\+m_k, [m_1,...,m_k])|m_1,...,m_k\in \mathcal{M}_F(a)\}$.\\
This give us the interpretation of exponentials: $!P=?P=\mathbb{M}_f(P)$

This is a Seely category and a model of linear logic since the isomorphismes $1\simeq []$ is trivial and $!A\otimes !B\simeq !(A\with B)$ is defined by $([a_1,...,a_l],[b_1,...,b_r])\simeq [(1,a_1),...,(1,a_l),(2,b_1),...,(2,b_r)]$. 

But it can even be seen as a categorical model of differential linear logic.
By defining the co-dereliction natural transforamtion $\bar d_A=\{(a,[a])|a\in A\}\in A\rta !A$, we are fixing the contraction $c_A=\{(l\+r,(l,r))|l,r\in!A\}$ the co-contraction $\bar c_A=\{((l,r),l\+r)|l,r\in !A\}$, the weakening $w_A=\{([],*)\}$ and the co-weakening $\bar w_A=\{(*,[])\}$. So that we can define the derivative $\partial_X=(id_{!X}\otimes d_X)\circ c_X: !X\rta !X\otimes X$ and the co-derivative $\bar\partial_X=\bar c_X\circ(id_{!X}\otimes\bar d_D):!X\otimes X\rta !X$. This derivatives are Taylor, {\em i.e.} if two morphisms $f_1,f_2: !A\rta B$ are such that $f_1\circ \bar\partial_A=f_2\circ\bar\partial_A$ then $f_1+(f_2\circ\bar w_X\circ w_X)=(f_1\circ\bar w_X\circ w_X)+f_2$. Finally the exponential acept anti-derivatives, since it is bi-comutative and $J_A=Id_A+\bar\partial_A\partial_A=Id_A$ is an isomorphism. For more detail about models of DLL see \ref{Ehr11}.

As for every categorical model of linear logic, the exponential is a comonade and induced a coKleisly $MRel=Rel_!$ whose objects are the set and whose morphisms from $P$ to $Q$ are the relations between $!P$ and $Q$. The identities are the relations $dig_P=\{(\{x\},x)|x\in P\}$ and the composition $f\circ g=\{(X,z)|\exists (Y,z)\in f, \forall y\in Y, (X,y)\in g\}$.

\subsection{Algebraic model}

In order to have an algebraic model of \dlcalcul we only need a reflexive object, {\em i.e.} a triplet $(M,app,abs)$ where $M$ is an object of MRel, $app:(M\rta (!M^\bot\lolipop M)=(\mathcal{M}_f(M)\times M))$ and $aps:((\mathcal{M}_f(M)\times M)\rta M)$ such that $app\circ abs=Id$. Such an object can a priory found by taking the lower fix point of $M\mapsto M\Rta M=?M\!\lolipop M=\mathcal{M}_f(M)\times M$. But this will just leads to the trivial empty model. We will then resolve the more complicated fix point $M\mapsto (?M)^{\mathbb{\with N}}\!\!\!=\mathbb{N}\times \mathcal{M}_f(M)$ where the exponent represent an infinit tensor product. The lower fix point will be called $\Minf$.

An other way to see the fixpoint is to say that $M$ have to be equal to the set of quazi everywhere empty lists of finite substets of itself. Its element are the recursively defined as being either $*$, the list of empty elements, or $a\cons\alpha$ with $a\in\mathcal{M}_f(\Minf)$ and $\alpha\in\Minf$. The coresponding $app$ and $abs$ arise imediatly from the functoriality: $app=\{(a\cons\alpha,(a,\alpha))|(a,\alpha)\in \Minf\}$ and $abs=\{((a,\alpha),a\cons\alpha)|a,\alpha\in \Minf\}$

In order to be understandable, we are presenting the interpretation of terms via a type system with types living in $\Minf$. The usual presentation of the interpretation can be recoverd from the type system:\\
$\llb M\rrb^{\bar x}=\{(\bar a,\alpha)|\bar x:\bar a\vdash M:\alpha\}$\\
$\llb Q\rrb^{\bar x}=\{\bar a|\bar x:\bar a\vdash Q\}$\\
The type system is the following:

\hskip -2.5cm
\begin{minipage}[b]{18cm}
\begin{center}
\AxiomC{$\Gamma\vdash \Delta$}
\UnaryInfC{$x:[],\Gamma\vdash \Delta$}
\DisplayProof\hskip 20pt
\AxiomC{}
\UnaryInfC{$x:[\alpha]\vdash x:\alpha$}
\DisplayProof\hskip 20pt
\AxiomC{$\Gamma\vdash M:\alpha$}
\UnaryInfC{$\Gamma\vdash M\+N:\alpha$}
\DisplayProof\hskip 20pt
\AxiomC{$\Gamma\vdash N:\alpha$}
\UnaryInfC{$\Gamma\vdash M\+N:\alpha$}
\DisplayProof\hskip 20pt
\AxiomC{$\Gamma,x:v\vdash M:\alpha$}
\UnaryInfC{$\Gamma\vdash \lambda x.M:v\cons\alpha$}
\DisplayProof\\
\AxiomC{$\Gamma\vdash M:w\cons\alpha$}
\AxiomC{$\Gamma'\vdash B:w$}
\BinaryInfC{$\Gamma+\Gamma'\vdash  M\ B:\alpha$}
\DisplayProof\hskip 20pt
\AxiomC{$\bigwedge_{j\le n}\ \Gamma_j\vdash L_j:\beta_j$}
\AxiomC{$\bigwedge_{i\ge n}\ \Gamma_i\vdash L:\beta_i$}
\BinaryInfC{$(\Sigma^{n+m}_{r=1}\Gamma_r)\vdash [L_1,...,L_n;L^!]:[\beta_1,...,\beta_{n+m}]$}
\DisplayProof\\
\AxiomC{$\Gamma\vdash Q$}
\UnaryInfC{$\Gamma\vdash  \bar \tau (Q):*$}
\DisplayProof\hskip 50pt
\AxiomC{$\Gamma\vdash M:*$}
\UnaryInfC{$\Gamma\vdash  \tau [M]$}
\DisplayProof\hskip 50pt
\AxiomC{$\Gamma\vdash Q$}
\AxiomC{$\Gamma'\vdash R$}
\BinaryInfC{$\Gamma+\Gamma'\vdash  Q|R$}
\DisplayProof\hskip 50pt
\AxiomC{}
\UnaryInfC{$\vdash \epsilon$}
\DisplayProof \\
\end{center}
\end{minipage}

\end{document}